\newtheorem{theorem}{Theorem}
\newtheorem{corollary}[theorem]{Corollary}
\newtheorem{proposition}[theorem]{Proposition}
\newcommand{\qed}{{\hfill$\Box$}}
\newenvironment{proof}{\noindent \textbf{{Proof~} }}{\qed}
\def\bi{\begin{itemize}}
\def\ei{\end{itemize}}
\def\be{\begin{equation}}
\def\ee{\end{equation}}
\def\bea{\begin{eqnarray}}
\def\eea{\end{eqnarray}}
\def\ben{\begin{eqnarray*}}
\def\een{\end{eqnarray*}}
\def\>{\rangle}
\def\<{\langle}
\newcommand{\1} I
\def\*{\star}
\def\cS{{\cal S}}
\def\0{{\mathbf{0}}}
\def\1{{\mathbf{1}}}
\def\2{{\mathbf{2}}}
\def\3{{\mathbf{3}}}
\def\4{{\mathbf{4}}}
\def\5{{\mathbf{5}}}
\def\6{{\mathbf{6}}}
\def\7{{\mathbf{7}}}
\def\8{{\mathbf{8}}}
\def\9{{\mathbf{9}}}
\begin{document}

\title{Classical Enhancement of Quantum Error-Correcting Codes}
\author{Isaac Kremsky}
\affiliation{Physics Department, University of Southern California, Los Angeles, CA 90089}
\author{Min-Hsiu Hsieh}
\thanks{To whom correspondence should be addressed}
\email{minhsiuh@gmail.com}
\author{Todd A. Brun}
\affiliation{Ming Hsieh Electrical Engineering Department, University of Southern
California, Los Angeles, CA 90089}
\date{\today}

\begin{abstract}
We present a general formalism for quantum error-correcting codes that
encode both classical and quantum information (the EACQ formalism). This
formalism unifies the entanglement-assisted formalism and classical error
correction, and includes encoding, error correction, and decoding steps such
that the encoded quantum and classical information can be correctly
recovered by the receiver. We formally define this kind of quantum code
using both stabilizer and symplectic language, and derive the appropriate
error-correcting conditions. We give several examples to demonstrate the
construction of such codes.
\end{abstract}

\maketitle

\section{Introduction}

\label{sec1}

Since Shor proposed the first quantum error correction code (QECC) \cite%
{Shor95}, research in this field has progressed rapidly. A broad theory of
quantum error-correcting codes was created with the stabilizer formalism and
its symplectic formulation \cite{CRSS98,DG97thesis}, that allow the
systematic description of a large class of quantum error correction codes
and their error-correcting properties. In this formulation, a QECC is
defined to be a subspace fixed by a \textit{stabilizer group}. At the same
time, a construction of QECCs from classical error correction codes was
proposed separately by Calderbank, Shor and Steane \cite{CS96,Ste96}, the
so-called CSS construction. Later this was generalized to give a stronger
connection between quantum codes and classical symplectic codes; however, it
seemed that this connection between quantum coding theory and classical
coding theory was not universal, since only certain symplectic codes
possessed quantum equivalents.

More recent developments in quantum coding theory have led to the
development of the operator quantum error correction formalism (OQECC) \cite%
{AKS06a,Bacon05a,KS06a,DRD05,Kribs06,NP05,DB05} and the
entanglement-assisted quantum error correction formalism (EAQECC) \cite%
{Bow02,EAQECC2,BDH06}; moreover, it is possible to produce a unified
formalism (EAOQECC) \cite{HBD07} that combines both OQECCs and EAQECCs. This
formalism demonstrates that a broader connection exists between classical
and quantum error correction theory. Good QECCs can be obtained by a
generalized CSS construction from good classical codes. This opens the door,
for example, to the construction of high-quality quantum codes from modern
classical codes, such as Turbo and LDPC codes \cite{HBDldpc07}.

In this paper, we generalize this construction in a different way, by
proposing new quantum codes that can be used to transmit both classical and
quantum information simultaneously. We call this scheme the
entanglement-assisted, classically enhanced quantum error correction
formalism, but throughout the paper it will be referred to simply as the
EACQ formalism . The EACQ formalism can be considered a generalization
EAQECCs, or as a unification of quantum and classical linear error
correction codes. This unification also makes contact with results in
quantum information theory, where bounds exist on the asymptotic
transmission of simultaneous classical and quantum information, including
the use of entanglement assistance. It is believed that these bounds are
better than simple time-sharing between codes for transmitting quantum and
classical information separately through a quantum channel \cite{prep2008dev}%
. It is our hope that it may be possible to construct classes of codes which
achieve these rates in the limit of large block size.

This paper is organized as follows. We give a brief introduction of the
EAQECC formalism using both the stabilizer and the symplectic language in
section \ref{sec2}. In section \ref{sec3}, we formally define a quantum code
(EACQ) that can transmit both classical and quantum information at the same
time. Several properties of this kind of quantum code are also discussed in
this section. We provide several examples in section \ref{sec4}, to
demonstrate the usefulness of this formalism. We conclude in section \ref%
{sec5} by examining some special cases, and arguing that the EACQ formalism
is indeed a generalization and unification of quantum and classical coding
theory.

\section{EAQECC}

\label{sec2}

In this section, we will review entanglement-assisted quantum error
correction using both stabilizer and symplectic language.

Let $\mathcal{G}_{n}$ be the $n$-fold Pauli Group \cite{NC00}. Every
operator in $\mathcal{G}_{n}$ has either eigenvalues $\pm 1$ or $\pm i$. An $%
[[n,q,d;e]]$ EAQECC is a quantum code that encodes $q$ logical quantum bits
(qubits) into $n$ physical qubits with the help of $e$ maximally entangled
pairs (ebits) shared between sender and receiver, and can correct up to $%
\lfloor d/2\rfloor $ single-qubit errors. Such an EAQECC is defined by a
non-commuting group $\mathcal{S}_{Q}=\langle \overline{Z}_{1},\cdots ,%
\overline{Z}_{s},\overline{Z}_{s+1},\overline{X}_{s+1},\cdots ,\overline{Z}%
_{s+e},\overline{X}_{s+e}\rangle \subset \mathcal{G}_{n}$ of size $2^{s+2e}$%
, where $s+e+q=n$, and the generators $\overline{Z}_{i}$ and $\overline{X}%
_{i}$ satisfy the following commutation relations:
\begin{equation}
\begin{split}
\lbrack \overline{Z}_{i},\overline{Z}_{j}]& =0\ \ \ \ \forall i,j \\
\lbrack \overline{X}_{i},\overline{X}_{j}]& =0\ \ \ \ \forall i,j \\
\lbrack \overline{X}_{i},\overline{Z}_{j}]& =0\ \ \ \ \forall i\neq j \\
\{\overline{X}_{i},\overline{Z}_{i}\}& =0\ \ \ \ \forall i.
\end{split}
\label{comm}
\end{equation}%
We define the isotropic subgroup $\mathcal{S}_{Q,I}$ of $\mathcal{S}_{Q}$ to
be the subgroup generated by $\{\overline{Z}_{1},\cdots ,\overline{Z}_{s}\}$%
; it is of size $2^{s}$. Similarly, the symplectic subgroup $\mathcal{S}%
_{Q,S}$ of $\mathcal{S}_{Q}$ is of size $2^{2e}$ and is generated by $\{%
\overline{Z}_{s+1},\overline{X}_{s+1},\cdots ,\overline{Z}_{s+e},\overline{X}%
_{s+e}\}$. The isotropic subgroup $\mathcal{S}_{Q,I}$ is Abelian; however,
the symplectic subgroup $\mathcal{S}_{Q,S}$ is not. We can easily construct
an Abelian extension of $\mathcal{S}_{Q,S}$ that acts on $n+e$ qubits, by
specifying the following generators:
\begin{eqnarray*}
\overline{Z}_{1} &\otimes &I, \\
&\vdots & \\
\overline{Z}_{s} &\otimes &I, \\
\overline{Z}_{s+1} &\otimes &Z_{1}, \\
\overline{X}_{s+1} &\otimes &X_{1}. \\
&\vdots & \\
\overline{Z}_{s+e} &\otimes &Z_{e}, \\
\overline{X}_{s+e} &\otimes &X_{e},
\end{eqnarray*}%
where the first $n$ qubits are on the side of the sender (Alice) and the
extra $e$ qubits are taken to be on the side of the receiver (Bob). The
operators $Z_{i}$ or $X_{i}$ to the right of the tensor product symbol above
is the Pauli operator $Z$ or $X$ acting on Bob's $i$-th qubit. The picture
is that Alice and Bob initially share $e$ ebits; Alice encodes her $q$
qubits together with her halves of the $e$ entangled pairs and $s$ ancilla
qubits. Bob's qubits are his halves of the $e$ entangled pairs. Because this
code assumes pre-existing entanglement between Alice and Bob, it is an
entanglement-assisted quantum error-correcting code (EAQECC). We denote such
an Abelian extension of the group $\mathcal{S}_{Q,S}$ by $\widetilde{%
\mathcal{S}}_{Q,S}$. This EAQECC can correct an error set ${\mathbf{E}}%
\subset \mathcal{G}_{n}$ if for all $E_{1},E_{2}\in {\mathbf{E}}$, $%
E_{2}^{\dagger }E_{1}\in \mathcal{S}_{Q,I}\cup (\mathcal{G}_{n}-N(\mathcal{S}%
_{Q}))$, where $N(\mathcal{S})$ is the normalizer of group $\mathcal{S}$.

Before we describe EAQECCs using the symplectic language, we need to first
discuss some of the basic properties of the symplectic form which are
relevant to the discussion that follows. The symplectic form of vectors in $(%
\mathbb{Z}_{2})^{2n}$ is useful for specifying Pauli operators on $n$ qubits
when the global phase may be ignored. We write a vector ${\mathbf{u}}\in (%
\mathbb{Z}_{2})^{2n}$ in symplectic form by splitting it into two vectors ${%
\mathbf{x}},{\mathbf{z}}\in (\mathbb{Z}_{2})^{n}$ and writing it as follows:
${\mathbf{u}}=({\mathbf{z}}|{\mathbf{x}})$. We define
\begin{equation*}
N_{({\mathbf{z}}|{\mathbf{x}})}\equiv Z^{z_{1}}X^{x_{1}}\otimes
Z^{z_{2}}X^{x_{2}}\otimes \cdots \otimes Z^{z_{n}}X^{x_{n}},
\end{equation*}%
where $z_{r}$ $(x_{r})$ is the $r$-th element of ${\mathbf{z}}$ $({\mathbf{x}%
})$. Thus a set of $m$ Pauli-operators acting on $n$ qubits may be specified
by a matrix with $m$ rows ${\mathbf{u}}_{i}\in (\mathbb{Z}%
_{2})^{2n},i=1,2,\cdots ,m$. The symplectic product between two vectors is
defined as
\begin{equation*}
({\mathbf{z}}|{\mathbf{x}})\odot ({\mathbf{z}}^{\prime }|{\mathbf{x}}%
^{\prime })={\mathbf{z}}\cdot {\mathbf{x}}^{\prime T}-{\mathbf{x}}\cdot {%
\mathbf{z}}^{\prime T}.
\end{equation*}%
(Note that in the binary case, as here, subtraction is the same as
addition.) Two Pauli operators $N_{({\mathbf{z}}|{\mathbf{x}})}$ and $N_{({%
\mathbf{z}}^{\prime }|{\mathbf{x}}^{\prime })}$ commute if and only if $({%
\mathbf{z}}|{\mathbf{x}})\odot ({\mathbf{z}}^{\prime }|{\mathbf{x}}^{\prime
})=0$.

Recall that the stabilizer $\mathcal{S}_{Q}$ of an $[[n,q;e]]$ EAQECC is
generated by $s+2e$ elements. Therefore, it can be specified by an $%
(s+2e)\times 2n$ symplectic matrix, $\hat{F}$, which we will refer to as the
\textit{quantum parity check matrix} in this paper. Thus,
\begin{equation}
\mathcal{S}_{Q}=\{N_{{\mathbf{u}}}|{\mathbf{u}}\in \text{Rowspace}(\hat{F}%
)\},  \label{Sqstabilizer}
\end{equation}%
where
\begin{equation}
\hat{F}=\left(
\begin{array}{c}
{\mathbf{u}}_{1} \\
\vdots \\
{\mathbf{u}}_{s+e} \\
{\mathbf{v}}_{s+1} \\
\vdots \\
{\mathbf{v}}_{s+e}%
\end{array}%
\right) .  \label{F_hat}
\end{equation}%
In this matrix, the rows ${\mathbf{u}}_{1}\cdots {\mathbf{u}}_{s+e}$
represent the generators $\overline{Z}_{1}\cdots \overline{Z}_{s+e}$, and
the rows ${\mathbf{v}}_{s+1}\cdots {\mathbf{v}}_{s+e}$ represent $\overline{X%
}_{s+1}\cdots \overline{X}_{s+e}$. The commutation relations in (\ref{comm})
translate to the following:
\begin{equation}
\begin{split}
{\mathbf{u}}_{i}\odot {\mathbf{u}}_{j}& =0\ \ \ \ \forall i,j \\
{\mathbf{v}}_{i}\odot {\mathbf{v}}_{j}& =0\ \ \ \ \forall i,j \\
{\mathbf{u}}_{i}\odot {\mathbf{v}}_{j}& =0\ \ \ \ \forall i\neq j \\
{\mathbf{u}}_{i}\odot {\mathbf{v}}_{i}& =1\ \ \ \ \forall i.
\end{split}
\label{comm_sp}
\end{equation}%
The isotropic subgroup $\mathcal{S}_{Q,I}$ and the symplectic subgroup $%
\mathcal{S}_{Q,S}$ can be rewritten as:
\begin{equation*}
\begin{split}
\mathcal{S}_{Q,I}& =\{N_{{\mathbf{u}}}|{\mathbf{u}}\in \text{Rowspace}(\hat{F%
}_{I})\}, \\
\mathcal{S}_{Q,S}& =\{N_{{\mathbf{u}}}|{\mathbf{u}}\in \text{Rowspace}(\hat{F%
}_{S})\},
\end{split}%
\end{equation*}%
up to an overall phase, where
\begin{equation}
\hat{F}_{I}=\left(
\begin{array}{c}
{\mathbf{u}}_{1} \\
\vdots \\
{\mathbf{u}}_{s}%
\end{array}%
\right) ,\ \ \ \hat{F}_{S}=\left(
\begin{array}{c}
{\mathbf{u}}_{s+1} \\
\vdots \\
{\mathbf{u}}_{s+e} \\
{\mathbf{v}}_{s+1} \\
\vdots \\
{\mathbf{v}}_{s+e}%
\end{array}%
\right) .
\end{equation}

We can now specify the error correcting condition in the symplectic
formulation. This EAQECC can correct an error set ${\mathbf{E}}\subset(%
\mathbb{Z}_2)^{2n}$ if for all ${\mathbf{e}}_1, {\mathbf{e}}_2 \in {\mathbf{E%
}}$, either $\hat{F}\odot ({\mathbf{e}}_2-{\mathbf{e}}_1)\neq 0$ or $({%
\mathbf{e}}_2-{\mathbf{e}}_1)\in\text{Rowspace}(\hat{F}_I)$.

\section{Classically Enhanced Quantum Error Correction}

\label{sec3}In this section, we will present a new quantum code that can
transmit both classical and quantum information at the same time.

\subsection{The Stabilizer Formalism}

We define an $[[n,q:c,d;e]]$ entanglement-assisted, classically enhanced
quantum error correction code (EACQ) to be a quantum code which encodes $q$
logical qubits and $c$ classical bits into $n$ physical qubits with the help
of $e$ ebits. Our quantum information is given by the $q$-dimensional state $%
|\phi \rangle \in (\mathcal{H}_{2})^{\otimes q}$, and our classical
information $i\in \left\{ 1,2,\ldots ,2^{c}\right\} $ is represented by a
vector ${\mathbf{x}}_{i}\in (\mathbb{Z}_{2})^{c}$. Here, we keep the
subscript $i$ in ${\mathbf{x}}_{i}$ to remind the reader that ${\mathbf{x}}%
_{i}$ is the binary expression of $i$. Let us denote the $2^{q}$-dimensional
Hilbert space of the original qubits by $\mathcal{H}\equiv (\mathcal{H}%
_{2})^{\otimes q}$, and the subspaces of the $n$-dimensional encoded states
by $\mathcal{C}^{i}$. Our encoding operations $\hat{U}_{enc}^{i}:\mathcal{H}%
\rightarrow \mathcal{C}^{i}$ consist of appending the ancilla states $%
|0\rangle ^{\otimes s}$ and maximally entangled states $|\Phi _{+}\rangle
^{\otimes e}$, where $s+e+q=n$ and $|\Phi _{+}\rangle \equiv \frac{1}{\sqrt{2%
}}\left( \left\vert 00\right\rangle +\left\vert 11\right\rangle \right) ,$
to $|\phi \rangle $ followed by performing the unitary $U_{i}$. Thus, our
encoded states, or "codewords", are defined as
\begin{equation}
|\Psi _{i}\rangle \equiv U_{i}\left( |0\rangle ^{\otimes s}\otimes |\Phi
_{+}\rangle ^{\otimes e}\otimes |\phi \rangle \right) .
\label{encoded state}
\end{equation}%
We require that $\langle \Psi _{i}|\Psi _{j}\rangle =\delta _{ij}$ so that
the classical information is perfectly retrievable.

\begin{theorem}
\label{thm1} We specify an $[[n,q:c,d;e]]$ EACQ by the pair of groups $(%
\mathcal{S}_{Q},\mathcal{S}_{C})$. The quantum stabilizer $\mathcal{S}%
_{Q}=\langle \mathcal{S}_{Q,I},\mathcal{S}_{Q,S}\rangle $ of the code is
generated by $s+2e-c$ elements:
\begin{equation}
\begin{split}
\mathcal{S}_{Q,I}& =\langle \overline{Z}_{c_{1}+1},\overline{Z}%
_{c_{1}+2},\cdots ,\overline{Z}_{s}\rangle , \\
\mathcal{S}_{Q,S}& =\langle \overline{Z}_{s+c_{2}+1},\overline{X}%
_{s+c_{2}+1},\cdots ,\overline{Z}_{s+e},\overline{X}_{s+e}\rangle
\end{split}%
.  \label{S}
\end{equation}%
The classical stabilizer $\mathcal{S}_{C}=\langle \mathcal{S}_{C,I},\mathcal{%
S}_{C,S}\rangle $ of the code is generated by $c$ elements:
\begin{equation}
\begin{split}
\mathcal{S}_{C,I}& =\langle \overline{Z}_{1},\overline{Z}_{2},\cdots ,%
\overline{Z}_{c_{1}}\rangle , \\
\mathcal{S}_{C,S}& =\langle \overline{Z}_{s+1},\cdots ,\overline{Z}%
_{s+c_{2}},\overline{X}_{s+1},\cdots ,\overline{X}_{s+c_{2}}\rangle ,
\end{split}
\label{Sc}
\end{equation}%
where $q+s+e=n$ and $c_{1}+2c_{2}=c$, such that, $\forall g_{j}\in \mathcal{S%
}_{Q}$,
\begin{equation}
g_{j}|\Psi _{i}\rangle =|\Psi _{i}\rangle ,  \label{eq_SQ}
\end{equation}%
and
\begin{equation}
g_{j}^{\prime }|\Psi _{i}\rangle =(-1)^{x_{ij}}|\Psi _{i}\rangle ,
\label{eq_SC}
\end{equation}%
where $g_{j}^{\prime }$ is the $j$-th element of the generator set of $%
\widetilde{\mathcal{S}}_{C}$, which is the Abelian extension of $\mathcal{S}%
_{C}$, and $x_{ij}$ is the $j$-th element of ${\mathbf{x}}_{i}\in (\mathbb{Z}%
_{2})^{c}$.
\end{theorem}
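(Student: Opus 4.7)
My plan is to exhibit the encoding unitary $U_i$ explicitly as a Clifford operator composed with a message-dependent Pauli, and then verify the stabilizer conditions (\ref{eq_SQ}) and (\ref{eq_SC}) by direct computation on the unencoded state.

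I begin by observing that the pre-encoded state $|\psi_0\rangle \equiv |0\rangle^{\otimes s} \otimes |\Phi_+\rangle^{\otimes e} \otimes |\phi\rangle$ is a simultaneous $+1$ eigenstate of the Abelian group $\mathcal{S}_0$ generated by $Z_{A_k}$ for $k = 1,\ldots,s$ together with $Z_{A_{s+k}} Z_{B_k}$ and $X_{A_{s+k}} X_{B_k}$ for $k = 1,\ldots,e$, where $A_j$ labels Alice's $n$ qubits and $B_j$ labels Bob's $e$ qubits. The generators $\overline{Z}_j, \overline{X}_j$ specified in the theorem obey the symplectic commutation relations (\ref{comm_sp}) on Alice's $n$ qubits, so by the standard fact that Clifford unitaries realize every symplectic automorphism of $(\mathbb{Z}_2)^{2n}$, there exists a Clifford $V$ on Alice's qubits satisfying $V Z_{A_k} V^\dagger = \overline{Z}_k$ for $1 \leq k \leq s+e$ and $V X_{A_{s+k}} V^\dagger = \overline{X}_{s+k}$ for $1 \leq k \leq e$; the images of the remaining $q$ canonical pairs may be chosen freely, and become the logical $\overline{X},\overline{Z}$ on the encoded qubits.

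Next I absorb the classical message into a Pauli $P_i$ applied \emph{before} $V$. Specifically, $P_i$ places $X_{A_k}^{x_{i,k}}$ on ancilla $A_k$ for $k = 1,\ldots,c_1$ and places $X_{A_{s+k}}^{x_{i,c_1+k}} Z_{A_{s+k}}^{x_{i,c_1+c_2+k}}$ on Alice's half of the $k$-th classical Bell pair for $k = 1,\ldots,c_2$, using the elementary fact that $X_A^a Z_A^b |\Phi_+\rangle_{AB}$ is the joint $((-1)^a,(-1)^b)$-eigenstate of $(Z_A Z_B,\, X_A X_B)$. This labelling is chosen to match the generator-by-generator order of $\widetilde{\mathcal{S}}_C$ in (\ref{Sc}). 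Setting $U_i \equiv V P_i$, I define $|\Psi_i\rangle \equiv V P_i |\psi_0\rangle$.

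To verify both stabilization conditions I pull each stabilizer $g$ back through $V$:
\begin{equation*}
g |\Psi_i\rangle = V (V^\dagger g V) P_i |\psi_0\rangle = \sigma \cdot V P_i (V^\dagger g V) |\psi_0\rangle = \sigma |\Psi_i\rangle,
\end{equation*}
where $\sigma = \pm 1$ is the commutation sign of $P_i$ with $V^\dagger g V \in \mathcal{S}_0$. For $g \in \widetilde{\mathcal{S}}_Q$, the preimage is supported on ancillas indexed $c_1+1,\ldots,s$ or on entangled pairs indexed $c_2+1,\ldots,e$, disjoint from the support of $P_i$, so $\sigma = +1$ and (\ref{eq_SQ}) holds. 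For the $j$-th generator $g_j' \in \widetilde{\mathcal{S}}_C$, the preimage overlaps exactly one Pauli factor of $P_i$ at a single qubit and the Bell-pair computation gives $\sigma = (-1)^{x_{ij}}$, yielding (\ref{eq_SC}). The required orthogonality $\langle \Psi_i | \Psi_j \rangle = \delta_{ij}$ for $i \neq j$ then follows automatically because $|\Psi_i\rangle$ and $|\Psi_j\rangle$ are eigenvectors of at least one Hermitian classical stabilizer with distinct eigenvalues. The main obstacle is bookkeeping around the Abelian extension $\widetilde{\mathcal{S}}_C$: because the symplectic classical generators anti-commute on Alice's side, one must work with their joint Alice--Bob extensions before pulling back through $V$ and then verify that the one-sided $P_i$ still produces the prescribed signs, but the disjoint-support observation renders the quantum half automatic, while the classical half reduces to a single Bell-pair commutation check per block type.
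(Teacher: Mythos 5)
Your proposal is correct and is essentially the paper's own argument: your state $P_i|\psi_0\rangle$ is exactly the paper's canonical codeword $|\psi_i\rangle$ of Eq.~(\ref{simple transformations}), and your Clifford $V$ plays the role of the unitary $U$ that conjugates the trivial pair $(\mathcal{S}_Q',\mathcal{S}_C')$ to $(\mathcal{S}_Q,\mathcal{S}_C)$. You are merely more explicit where the paper says ``it is easy to verify,'' spelling out the existence of $V$ via the Clifford--symplectic correspondence and the sign bookkeeping on the Bell-pair eigenstates.
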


\begin{proof}
We begin with a canonical code that encodes the quantum information $|\phi
\rangle \in (\mathcal{H}_{2})^{\otimes q}$ together with classical
information ${\mathbf{x}}_{i}$ in the following trivial way:
\begin{equation}
\begin{split}
|\phi \rangle \overset{{\mathbf{x}}_{i}}{\longrightarrow }& |\psi
_{i}\rangle =\left( N_{\left( \mathbf{0}|\mathbf{x}_{a}\right) }|0\rangle
^{\otimes c_{1}}\right) |0\rangle ^{\otimes (s-c_{1})} \\
& \left[ \left( N_{\left( \mathbf{x}_{b_{2}}|\mathbf{x}_{b_{1}}\right)
}\otimes I\right) |\Phi _{+}\rangle ^{\otimes c_{2}}\right] |\Phi
_{+}\rangle ^{\otimes e-c_{2}}|\phi \rangle ,
\end{split}
\label{simple transformations}
\end{equation}%
where ${\mathbf{x}}_{a}\in (\mathbb{Z}_{2})^{c_{1}}$ and ${\mathbf{x}}%
_{b_{1}},{\mathbf{x}}_{b_{2}}\in (\mathbb{Z}_{2})^{c_{2}},$ and \textit{I }%
is the $c_{2}\times c_{2}$ identity acting on Bob's qubits. \ Instead of
encoding ${\mathbf{x}}_{i}$ as a whole, we separate ${\mathbf{x}}_{i}$ into $%
{\mathbf{x}}_{a}=x_{i1}\ldots x_{ic_{1}}$, ${\mathbf{x}}_{b_{1}}=x_{i,\left(
c_{1}+1\right) }\ldots x_{i,\left( c_{1}+c_{2}\right) }$, and ${\mathbf{x}}%
_{b_{2}}=$ $x_{i,\left( c_{1}+c_{2}+1\right) }\ldots x_{ic}$ such that $%
c_{1}+2c_{2}=c$, and encode ${\mathbf{x}}_{b_{1}}$ and ${\mathbf{x}}_{b_{2}}$
using $c_{2}$ pairs of maximally entangled states. $\mathbf{x}_{a}$ Clearly,
the set $\{|\psi _{i}\rangle \}$ is stabilized by $\mathcal{S}_{Q}^{\prime
}=\langle \mathcal{S}_{Q,I}^{\prime },\mathcal{S}_{Q,S}^{\prime }\rangle $,
where
\begin{equation}
\begin{split}
\mathcal{S}_{Q,I}^{\prime }& =\langle Z_{c_{1}+1},Z_{c_{1}+2},\cdots
,Z_{s}\rangle , \\
\mathcal{S}_{Q,S}^{\prime }& =\langle Z_{s+c_{2}+1},X_{s+c_{2}+1},\cdots
,Z_{s+e},X_{s+e}\rangle .
\end{split}
\label{dumb}
\end{equation}%
Now let $\mathcal{S}_{C}^{\prime }=\langle \mathcal{S}_{C,I}^{\prime },%
\mathcal{S}_{C,S}^{\prime }\rangle $, where
\begin{equation}
\begin{split}
\mathcal{S}_{C,I}^{\prime }& =\langle Z_{1},\cdots ,Z_{c_{1}}\rangle , \\
\mathcal{S}_{C,S}^{\prime }& =\langle Z_{s+1},\cdots
,Z_{s+c_{2}},X_{s+1},\cdots ,X_{s+c_{2}}\rangle ,
\end{split}%
\end{equation}%
and let $\widetilde{\mathcal{S}}_{C}^{\prime }$ be the Abelian extension of $%
\mathcal{S}_{C}^{\prime }$. Then it is easy to verify that
\begin{equation}
g_{j}^{\prime }|\psi _{i}\rangle =(-1)^{x_{ij}}|\psi _{i}\rangle ,
\end{equation}%
where $g_{j}^{\prime }$ is the $j$-th generator of $\widetilde{\mathcal{S}}%
_{C}^{\prime }$.

Since $(\mathcal{S}_{Q}^{\prime },\mathcal{S}_{C}^{\prime })$ is isomorphic
to $(\mathcal{S}_{Q},\mathcal{S}_{C})$, there exists an unitary $U$ such
that $\mathcal{S}_{Q}=U\mathcal{S}_{Q}^{\prime }U^{\dag }$ and $\mathcal{S}%
_{C}=U\mathcal{S}_{C}^{\prime }U^{\dag }$. The codewords $\{|\Psi
_{i}\rangle \}$ can also be obtained by
\begin{equation}
U|\psi _{i}\rangle =|\Psi _{i}\rangle .  \label{U}
\end{equation}%
It is then easy to verify that (\ref{eq_SQ}) and (\ref{eq_SC}) hold.
\end{proof}

Notice that $\langle\mathcal{S}_{Q},\mathcal{S}_{C}\rangle$ is the
stabilizer of an $[[n,q;e]]$ EAQECC code, and thus it fully specifies one of
the codewords from (\ref{encoded state}), $| \Psi_{\mathbf{0}} \rangle$. For
$c>0,$ the additional codewords are just unitary transformations of $| \Psi_{%
\mathbf{0}} \rangle$. Theorem \ref{thm1} confirms that $\mathcal{S}_C$ and $%
\mathcal{S}_{Q}$ together are sufficient to fully specify the codewords.

Now that we have uniquely defined our code, we will consider the conditions
that make a set of errors correctable, as well as the decoding procedure for
a given set of correctable errors. We will consider here only error sets
which are subsets of $\mathcal{G}^n,$ since it has been shown that the
ability to correct such a discrete error set implies the ability to correct
any linear combination of errors in that set.

\begin{theorem}
\label{cond}A set of errors ${\mathbf{E}}\subset \mathcal{G}^{n}$ is
correctable if for all $E_{m}$,$E_{p}\in {\mathbf{E}}$, $E_{m}^{\dagger
}E_{p}\in \langle \mathcal{S}_{Q,I},\mathcal{S}_{C,I}\rangle \cup (\mathcal{G%
}^{n}-N(\mathcal{S}_{Q}))$, where $N(\mathcal{S})$ is the normalizer of
group $\mathcal{S}$.
\end{theorem}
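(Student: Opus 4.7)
The plan is to adapt the standard EAQECC error-correcting argument by accounting for the two distinct roles played by the stabilizers: after passing to the Abelian extensions $\widetilde{\mathcal{S}}_Q$ and $\widetilde{\mathcal{S}}_C$, their generators form a commuting set of observables whose joint eigenvalues should encode both a quantum error syndrome and the classical message $\mathbf{x}_i$. First I would spell out the decoding procedure: the receiver measures the generators of $\widetilde{\mathcal{S}}_Q$ and $\widetilde{\mathcal{S}}_C$, obtaining a quantum syndrome $\mathbf{s}_Q$ and a raw classical readout of the form $\mathbf{x}_i\oplus\mathbf{f}(E)$, where $\mathbf{f}(E)$ records the anticommutation pattern of the actual error $E$ with the generators of $\mathcal{S}_C$. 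A fixed decoding table then assigns to each $\mathbf{s}_Q$ a representative correction $E_m\in\mathbf{E}$ together with a mask $\mathbf{f}(E_m)$ to be subtracted from the raw readout.

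Let $E_p\in\mathbf{E}$ be the actual error and $E_m$ the error chosen by the decoder for the observed syndrome. I would argue by cases on $E_m^{\dagger}E_p$ using the hypothesis. If $E_m^{\dagger}E_p\notin N(\mathcal{S}_Q)$, then $E_m$ and $E_p$ anticommute with some generator of $\mathcal{S}_Q$ and therefore yield distinct quantum syndromes; this contradicts the assumption that they share the syndrome, so the case is vacuous. Hence the decoder only has to contend with $E_m^{\dagger}E_p\in\langle\mathcal{S}_{Q,I},\mathcal{S}_{C,I}\rangle$. Writing this element as a product $g_Qg_C$ with $g_Q\in\mathcal{S}_{Q,I}$ and $g_C\in\mathcal{S}_{C,I}$, both factors commute with every generator of $\langle\mathcal{S}_Q,\mathcal{S}_C\rangle$, as can be read off from the canonical form in the proof of Theorem~\ref{thm1}. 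Thus $E_m$ and $E_p$ produce identical quantum syndromes and identical masks $\mathbf{f}(\cdot)$, so the decoder's choice is self-consistent.

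After applying $E_m^{\dagger}$ as correction the state becomes $g_Qg_C|\Psi_i\rangle$. By (\ref{eq_SQ}) we have $g_Q|\Psi_i\rangle=|\Psi_i\rangle$, and by (\ref{eq_SC}) we have $g_C|\Psi_i\rangle=(-1)^{\mathbf{a}\cdot\mathbf{x}_i}|\Psi_i\rangle$ for a binary vector $\mathbf{a}$ determined by the expansion of $g_C$ in the generators of $\mathcal{S}_{C,I}$. This sign depends only on the classical message $\mathbf{x}_i$, which the decoder has already extracted from the $\widetilde{\mathcal{S}}_C$ measurement, so it amounts to a known global phase on the recovered codeword and leaves the logical state $|\phi\rangle$ untouched. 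The classical message itself is correctly recovered because $g_Qg_C\in N(\mathcal{S}_C)$, so $\mathbf{f}(E_p)=\mathbf{f}(E_m)$ and subtracting the mask of the chosen representative from the raw readout gives $\mathbf{x}_i$.

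The main obstacle I anticipate is the bookkeeping between the classical and quantum syndromes. One must justify that the $\widetilde{\mathcal{S}}_C$ measurement really does reveal $\mathbf{x}_i$ in the presence of a general error in $\mathbf{E}$, and that the decoder can invert the error's contribution to obtain $\mathbf{x}_i$. This hinges on the joint measurability of the generators of $\widetilde{\mathcal{S}}_Q$ and $\widetilde{\mathcal{S}}_C$ (guaranteed by their Abelian extension) together with the fact that, within each correctable equivalence class, the mask $\mathbf{f}(\cdot)$ is constant, both of which follow from the structural properties established in Theorem~\ref{thm1} and the commutation relations (\ref{comm}).
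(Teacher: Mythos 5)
Your proof is correct and follows essentially the same case analysis as the paper: errors outside $N(\mathcal{S}_{Q})$ are distinguished by the quantum syndrome, while a product in $\langle\mathcal{S}_{Q,I},\mathcal{S}_{C,I}\rangle$ acts on each codeword only by a phase determined by $\mathbf{x}_{i}$, so both the quantum and classical data survive. The only cosmetic difference is that you fold the classical readout (via the mask $\mathbf{f}$) into the syndrome measurement, whereas the paper defers it to its Theorem~\ref{recovery}, measuring $\widetilde{\mathcal{S}}_{C}$ after recovery.
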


\begin{proof}
We consider the following different cases.

\begin{enumerate}
\item If $E_{m}^{\dagger }E_{p}\in \mathcal{G}^{n}-N(\mathcal{S}_{Q})$, then
by definition there is at least one element $g_{j}\in \mathcal{S}_{Q}$ such
that
\begin{equation*}
\lbrack E_{m}^{\dagger }E_{p},g_{j}]\neq 0.
\end{equation*}%
Then we are guaranteed that $E_{m}$ and $E_{p}$ have different error
syndromes on the set of codewords $\{|\Psi _{i}\rangle \}$. We can then
perform a recovery operation based on the error syndrome. \ If it is
determined that the error $E_{m}$ occurred, the original codeword may be
recovered by simply performing the unitary $E_{m}$ since $E_{m}\in \mathcal{G%
}^{n}.$

\item If $E_{m}^{\dagger }E_{p}\in N(\mathcal{S}_{Q})$, there are three
cases:

\begin{enumerate}
\item If $E_{m}^{\dagger }E_{p}\in \mathcal{S}_{Q,I}$, then $E_{m}^{\dagger
}E_{p}|\Psi _{i}\rangle =|\Psi _{i}\rangle $. The errors have the same
syndrome, but they also act on the code space the same way. (This is the
case of a \textit{degenerate} code.)

\item If $E_{m}^{\dagger }E_{p}\in \mathcal{S}_{C,I}$, then by (\ref{eq_SC}%
), $E_{m}^{\dagger }E_{p}|\Psi _{i}\rangle =\pm |\Psi _{i}\rangle $. The
errors have the same syndrome, but their effects differ by a possible global
phase without changing the classical information $i$ embedded in the
codeword $|\Psi _{i}\rangle $. Therefore, we can still recover both the
quantum and classical information. (See Theorem \ref{recovery}).

\item For all the rest, the errors act nontrivially on the codewords $%
\{|\Psi _{i}\rangle \}$, but do not have a unique syndrome. If this case
applies to any pair of errors $E_{m},E_{p}\in {\mathbf{E}}$ then the error
set ${\mathbf{E}}$ is uncorrectable.
\end{enumerate}
\end{enumerate}

Combining these cases, we get that whenever $E_{m}^{\dagger }E_{p}\in
\langle \mathcal{S}_{Q,I},\mathcal{S}_{C,I}\rangle \cup (\mathcal{G}^{n}-N(%
\mathcal{S}_{Q}))$ $\forall E_{m},E_{p}\in {\mathbf{E}}$, the codewords $%
\{|\Psi _{i}\rangle \}$ can be recovered up to a possible globe phase.
\end{proof}

\begin{theorem}
\label{recovery} Once error recovery has been performed, the classical index
$i$ may be determined by measuring each of the $g_{k}^{\prime }\in
\widetilde{\mathcal{S}}_{C}$ observables. The original quantum state $|\phi
\rangle $ may be recovered by performing the unitary $U_{i}^{-1}$ and then
discarding the ancillae.
\end{theorem}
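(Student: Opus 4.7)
The plan is to verify each of the two assertions directly from the defining properties of the code, using Theorem~\ref{thm1} and the encoding formula (\ref{encoded state}). The argument essentially reduces to a verification once Theorems~\ref{thm1} and~\ref{cond} are in hand, since after error recovery we may assume we hold the uncorrupted codeword $|\Psi_i\rangle$.

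First I would handle the classical readout. By construction $\widetilde{\mathcal{S}}_C$ is the Abelian extension of $\mathcal{S}_C$, so its $c$ generators $\{g_k'\}$ mutually commute and can be measured jointly in any order. Equation (\ref{eq_SC}) states that $g_k'|\Psi_i\rangle = (-1)^{x_{ik}}|\Psi_i\rangle$, so $|\Psi_i\rangle$ is a simultaneous eigenstate with eigenvalue sign pattern $((-1)^{x_{i1}},\ldots,(-1)^{x_{ic}})$. Reading off the signs yields the bit string $\mathbf{x}_i$, which by definition encodes the index $i$. Since $|\Psi_i\rangle$ is already a joint eigenstate of all $g_k'$, the measurement is non-demolition and leaves the codeword undisturbed. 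Next, for the quantum recovery, once $i$ is known the receiver knows which encoding unitary $U_i$ was applied by the sender; applying its inverse to (\ref{encoded state}) yields
\begin{equation*}
U_i^{-1}|\Psi_i\rangle = |0\rangle^{\otimes s}\otimes|\Phi_+\rangle^{\otimes e}\otimes|\phi\rangle,
\end{equation*}
and tracing out the $s$ ancilla qubits together with the $e$ halves of the shared ebits returns $|\phi\rangle$ exactly.

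The step that deserves the most care, and that I expect to be the main obstacle to making this into a fully rigorous argument, is the claim that the classical measurement is genuinely non-demolition with respect to the quantum payload $|\phi\rangle$. This reduces to verifying that the eigenvalue $(-1)^{x_{ik}}$ in (\ref{eq_SC}) depends only on the classical label $i$ and not on the particular quantum state $|\phi\rangle$ being transmitted. Since $|\Psi_i\rangle = U_i(|0\rangle^{\otimes s}\otimes|\Phi_+\rangle^{\otimes e}\otimes|\phi\rangle)$ for every $|\phi\rangle \in (\mathcal{H}_2)^{\otimes q}$, the codewords for fixed $i$ span the entire code subspace $\mathcal{C}^i$, and (\ref{eq_SC}) must hold throughout $\mathcal{C}^i$; hence $g_k'$ restricts to $(-1)^{x_{ik}}$ times the identity on $\mathcal{C}^i$. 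Consequently the joint measurement projects every superposition within $\mathcal{C}^i$ onto itself, extracting $\mathbf{x}_i$ with no back-action on the quantum information, so that the subsequent application of $U_i^{-1}$ cleanly recovers $|\phi\rangle$ as claimed.
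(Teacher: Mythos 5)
Your proof is correct and follows essentially the same route as the paper's: read off $\mathbf{x}_i$ by measuring the commuting generators of $\widetilde{\mathcal{S}}_C$ via (\ref{eq_SC}), then apply $U_i^{-1}$ and discard the ancillae; your added observation that $g_k'$ acts as the scalar $(-1)^{x_{ik}}$ on all of $\mathcal{C}^i$ (so the measurement is non-demolition on the quantum payload) is a worthwhile elaboration of a step the paper leaves implicit. The only detail you gloss over is that after recovery the state may be $\pm|\Psi_i\rangle$ rather than $|\Psi_i\rangle$ (the sign arising from case 2(b) of Theorem \ref{cond}), but this global phase affects neither the eigenvalue measurements nor the recovery of $|\phi\rangle$.
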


\begin{proof}
After we have performed error recovery, the state in our possession will be $%
\pm |\Psi _{i}\rangle $. Measuring the generator set $\{g_{k}^{\prime }\}$
of $\widetilde{\mathcal{S}}_{C}$ will guarantee proper identification of ${%
\mathbf{x}}_{i}$ by (\ref{eq_SC}). Once the classical index has been
identified, we can see from (\ref{encoded state}) that we may recover the
original quantum state $|\phi \rangle $ by performing $U_{i}^{-1}$ and
discarding the states $\pm |0\rangle ^{\otimes s}|\Phi _{+}\rangle ^{\otimes
e}$.
\end{proof}

\subsection{The Symplectic Formalism}

In the following, we will use the symplectic formalism to formulate
this problem and at the same time generalize Theorem 1. The goal
here is to show that actually the EACQs can be completely specified
by some classical parity
check matrix $H$ and quantum parity check matrix $\hat{H}$. Since an $%
[[n,q;e]]$ EAQECC can be defined by a $(s+2e)\times 2n$ quantum parity check
matrix $\hat{F}$ as shown in (\ref{F_hat}), we may specify the quantum
stabilizer $\mathcal{S}_{Q}$ by $\hat{F}$ and a binary matrix $F$:
\begin{widetext}
\begin{equation}\label{F}%
F=\left(
\begin{array}{cc|cc|cc}
\0_{s-c_1\times c_1} & I_{s-c_1\times s-c_1} & \0_{s-c_1\times e-c_2} & \0_{s-c_1\times c_2}& \0_{s-c_1\times e-c_2} & \0_{s-c_1\times c_2}\\
\0_{e-c_2\times s-c_1} & \0_{e-c_2\times c_1} & \0_{e-c_2\times c_2} & I_{e-c_2\times e-c_2} & \0_{e-c_2\times e-c_2} & \0_{e-c_2\times c_2}  \\
\0_{e-c_2\times s-c_1} & \0_{e-c_2\times c_1} & \0_{e-c_2\times e-c_2} & \0_{e-c_2\times c_2} & \0_{e-c_2\times c_2} & I_{e-c_2\times e-c_2}%
\end{array}\right),
\end{equation}
\end{widetext}where $I_{r\times r}$ is the $r\times r$\ identity matrix, and
$\mathbf{0}_{r\times t}$ is the $r\times t$ null matrix. That is,
\begin{equation}
\mathcal{S}_{Q}=\{N_{{\mathbf{v}}}|{\mathbf{v}}\in \text{Rowspace}(\hat{G}%
)\},  \label{sp_SQ}
\end{equation}%
where $\hat{G}=F\hat{F}$.

Now, we may take any full rank, $(s+2e)\times (s+2e)$ matrix $M$ and write
\begin{equation*}
F\hat{F}=(FM)(M^{-1}\hat{F})=H\hat{H},
\end{equation*}%
where $H=FM$ and $\hat{H}=M^{-1}\hat{F}$. Since $M$ is full rank, $\text{%
Rowspace}(\hat{H})=\text{Rowspace}(\hat{F})$, and $\hat{H}$ and $\hat{F}$
specify the same stabilizer. However, $H$ may be any $(s+2e-c)\times (s+2e)$
matrix having linearly independent rows, so $H$ is in fact an arbitrary
classical parity-check matrix!

Although one can always use Theorem \ref{thm1} to specify the code, it may
be somewhat tedious to find the $g_k^{\prime}\in\mathcal{S}_C$ in practice.
Therefore, when formulating a code in the language of parity-check matrices,
it may sometimes be more convenient to use a different set of eigenvalue
equations to take advantage of our \textit{a priori} knowledge of the
properties of the classical parity-check matrix $H$. $H$ specifies a set of $%
2^{c}$ classical codewords $\mathbf{y}_{i}\in(\mathbb{Z}_{2})^{s+2e}$
satisfying
\begin{equation}
H\mathbf{y}_{i}^{T}=\mathbf{0},i=\text{1,2,\ldots2}^{c}\text{.}
\label{kernel of H}
\end{equation}

\begin{theorem}
Assume we are given an $(s+2e)\times 2n$ quantum parity-check matrix $\hat{H}
$ with rows $\mathbf{u}_{l}^{\prime },l=1,2,\ldots ,(s+2e),$ and an $%
(s+2e-c)\times (s+2e)$ classical parity-check matrix $H$ whose kernel is $\{%
\mathbf{y}_{i}\},i=1,2,\ldots ,2^{c}.$ Then we may fully specify the
codewords by the equations, $\forall i,l$,
\begin{equation}
\label{eq_sp} N_{{\mathbf{u}}_{l}^{\prime }}|\Psi _{i}\rangle
=(-1)^{y_{il}}|\Psi _{i}\rangle .
\end{equation}
\end{theorem}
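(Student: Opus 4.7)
The plan is to leverage the factorization $H\hat{H}=F\hat{F}$ established in the text preceding the theorem: writing $H=FM$ and $\hat{H}=M^{-1}\hat{F}$ for an invertible $(s+2e)\times(s+2e)$ matrix $M$, we see that the rows of $\hat{H}$ span the same symplectic rowspace as those of $\hat{F}$, so the $s+2e$ operators $N_{\mathbf{u}_{l}^{\prime}}$ are merely a basis change of the canonical generators appearing in Theorem~\ref{thm1}, with $H$ singling out which combinations constitute $\mathcal{S}_{Q}$.

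The first step is to check that (\ref{eq_sp}) is consistent with (\ref{eq_SQ}). Every element of $\mathcal{S}_{Q}$ has symplectic representative $\mathbf{s}H\hat{H}$ by (\ref{sp_SQ}), hence equals $\prod_{l}N_{\mathbf{u}_{l}^{\prime}}^{(\mathbf{s}H)_{l}}$ up to phase. Applying (\ref{eq_sp}) to $|\Psi_{i}\rangle$ yields the eigenvalue $(-1)^{\mathbf{s}H\mathbf{y}_{i}^{T}}=+1$, since $H\mathbf{y}_{i}^{T}=\mathbf{0}$ by hypothesis. Thus any state obeying (\ref{eq_sp}) is automatically stabilized by $\mathcal{S}_{Q}$.

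The second step is to match the classical labelings. Since $\dim\ker(H)=c$, there are $2^{c}$ vectors $\mathbf{y}_{i}$. For the canonical choice $M=I$, $\ker(F)$ is spanned by the standard basis vectors supported on the classical generator coordinates, so $\mathbf{y}_{i}$ reduces to $\mathbf{x}_{i}$ and (\ref{eq_sp}) collapses to (\ref{eq_SC}). For general $M$ the bijection $\mathbf{x}_{i}\leftrightarrow\mathbf{y}_{i}$ is the invertible linear relabeling induced by $M$ acting on $\ker(F)$, and the eigenvalue assignments of (\ref{eq_sp}) transform consistently, so the codeword families agree with those produced by Theorem~\ref{thm1}.

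The main obstacle is that the rows of $\hat{H}$ are typically not pairwise commuting on Alice's $n$ qubits --- the transformation $M^{-1}$ generally scrambles the symplectic products (\ref{comm_sp}) --- yet (\ref{eq_sp}) seems to demand simultaneous $\pm 1$-eigenstates. I would handle this by passing to the Abelian extension employed in the proof of Theorem~\ref{thm1}: after adjoining suitable Pauli factors on Bob's $e$ ancillary qubits the $s+2e$ extended generators all commute, their joint $\pm 1$-eigenspaces partition the $(n+e)$-qubit Hilbert space into $2^{s+2e}$ subspaces each of dimension $2^{n+e-s-2e}=2^{q}$ (using $s+e+q=n$), and the $2^{c}$ vectors $\mathbf{y}_{i}\in\ker(H)$ select exactly the code subspaces $\mathcal{C}^{i}$ of (\ref{encoded state}); composing with the unitary $U$ of (\ref{U}) then produces the claimed codewords $|\Psi_{i}\rangle$.
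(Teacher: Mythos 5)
Your proposal is correct and follows essentially the same route as the paper: both rest on the factorization $H\hat{H}=(FM)(M^{-1}\hat{F})$, the induced relabeling $\mathbf{y}_{i}=M^{-1}\mathbf{x}_{i}$ of the kernel vectors that reduces (\ref{eq_sp}) to the eigenvalue equations of Theorem~\ref{thm1}, and the observation that $H\mathbf{y}_{i}^{T}=\mathbf{0}$ forces eigenvalue $+1$ on every element of $\mathcal{S}_{Q}$. Your additional remark that the symplectic rows admit simultaneous eigenstates only after passing to the Abelian extension on Bob's $e$ qubits addresses a point the paper leaves implicit, and is a worthwhile clarification.
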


\begin{proof}
Theorem \ref{thm1} can be rewritten as, $\forall i,j$,
\begin{equation*}
N_{{\mathbf{u}}_{j}}|\Psi _{i}\rangle =(-1)^{x_{ij}}|\Psi _{i}\rangle ,
\end{equation*}%
where $\{{\mathbf{x}}_{i}\}$ is the kernel of $F$, and ${\mathbf{u}}_{j}$ is
the $j$-th row of $\hat{F}$. Since $\hat{H}=M^{-1}\hat{F}$, then
\begin{equation}
\begin{split}
N_{{\mathbf{u}}_{l}^{\prime }}|\Psi _{i}\rangle & =\prod_{m=1}^{s+2e}(N_{{%
\mathbf{u}}_{m}})^{(M^{-1})_{lm}}|\Psi _{i}\rangle , \\
& =(-1)^{\sum_{m=1}^{s+2e}(M^{-1})_{lm}x_{im}}|\Psi _{i}\rangle , \\
& =(-1)^{{\mathbf{y}}_{il}}|\Psi _{i}\rangle ,
\end{split}%
\end{equation}%
where ${\mathbf{y}}_{i}=M^{-1}{\mathbf{x}}_{i}$. In order to be valid
codewords, $\{|\Psi _{i}\rangle \}$ must also satisfy $N_{{\mathbf{w}}%
_{j}}|\Psi _{i}\rangle =|\Psi _{i}\rangle $, where ${\mathbf{w}}_{j}$ is the
$j$-th row of $H\hat{H}.$ Then
\begin{eqnarray*}
N_{{\mathbf{w}}_{j}}|\Psi _{i}\rangle &=&\left( \prod\limits_{l=1}^{s+2e}(N_{%
{\mathbf{u}}_{l}^{\prime }})^{H_{jl}}\right) |\Psi _{i}\rangle , \\
&=&(-1)^{\sum_{l=1}^{s+2e}H_{jl}y_{il}}|\Psi _{i}\rangle , \\
&=&(-1)^{0}|\Psi _{i}\rangle =|\Psi _{i}\rangle .
\end{eqnarray*}%
This concludes our proof.
\end{proof}

We have now established a new set of codewords with stabilizer
\begin{equation*}
S_{Q}=\left\{N_{{\mathbf{v}}}|{\mathbf{v}}\in \text{Rowspace}(H\hat{H}%
)\right\},
\end{equation*}
and a new way of specifying the codewords via (\ref{eq_sp}). Theorem \ref%
{cond} was cast in general enough terms that it is applicable to our new
code. So we are now in a position to give the error-correcting conditions
and to explain how to perform error detection and recovery in the language
of the symplectic form as a corollary to Theorem \ref{cond}.

\begin{corollary}
The set of correctable errors ${\mathbf{E}}$ for a code specified by the
quantum parity-check matrix $\hat{H}$ and classical parity-check matrix $H$
are such that for every distinct $N_{{\mathbf{e}}},N_{{\mathbf{e}}^{\prime
}}\in {\mathbf{E}}$, either

\begin{enumerate}
\item ${\mathbf{e}}-{\mathbf{e}}^{\prime }\in \text{Rowspace}(\hat{H}_I)$, or

\item $H\hat{H}\odot\left({\mathbf{e}}-{\mathbf{e}}^{\prime }\right)^{T}\neq%
\mathbf{0}$.
\end{enumerate}

\end{corollary}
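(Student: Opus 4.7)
The plan is to derive the corollary as a direct translation of Theorem~\ref{cond} into the symplectic language, using the representation-theoretic dictionary already set up in the paper. I would begin by recalling that a Pauli operator $E\in\mathcal{G}^n$ corresponds (up to phase) to a symplectic vector ${\mathbf{e}}\in(\mathbb{Z}_2)^{2n}$ via $E=N_{\mathbf{e}}$, that Pauli products correspond to $\mathbb{Z}_2$-sums of symplectic vectors (so $E_m^\dagger E_p \leftrightarrow {\mathbf{e}}-{\mathbf{e}}'$), and that $N_{\mathbf{a}}$ and $N_{\mathbf{b}}$ commute iff ${\mathbf{a}}\odot{\mathbf{b}}=0$. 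Under this dictionary, the corollary's two clauses are to be matched against the two clauses in the disjunction of Theorem~\ref{cond}.

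Next, I would translate each clause separately. For $E_m^\dagger E_p\in\mathcal{G}^n-N(\mathcal{S}_Q)$: since $\mathcal{S}_Q=\{N_{\mathbf{v}}:{\mathbf{v}}\in\mathrm{Rowspace}(H\hat{H})\}$ by \eqref{eq:sp_SQ}, lying outside the normalizer is equivalent to failing to commute with at least one element of $\mathcal{S}_Q$, which by linearity of the symplectic product is equivalent to failing to commute with at least one row of $H\hat{H}$; this gives exactly $H\hat{H}\odot({\mathbf{e}}-{\mathbf{e}}')^T\neq \mathbf{0}$. For $E_m^\dagger E_p\in\langle\mathcal{S}_{Q,I},\mathcal{S}_{C,I}\rangle$: identifying $\hat{H}_I$ as the submatrix of $\hat{H}$ whose rows are the symplectic images of the combined isotropic $\overline{Z}$-generators, membership in this subgroup translates to ${\mathbf{e}}-{\mathbf{e}}'\in\mathrm{Rowspace}(\hat{H}_I)$.

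Finally, combining the two translations yields exactly the disjunction in the corollary. The main (essentially only) obstacle is making the identification of $\hat{H}_I$ precise and consistent with the transformation $\hat{H}=M^{-1}\hat{F}$: one must verify that $M$ can be (or is implicitly) chosen to respect the block decomposition $\hat{F}=(\hat{F}_I;\hat{F}_S)^T$, so that the isotropic rows of $\hat{H}$ span the same subspace as the symplectic image of $\langle\mathcal{S}_{Q,I},\mathcal{S}_{C,I}\rangle$. Once this consistency check is dispatched, the result follows immediately from Theorem~\ref{cond} by applying the symplectic dictionary clause by clause.
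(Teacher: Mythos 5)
Your proposal is correct and follows essentially the same route as the paper: a clause-by-clause translation of Theorem~\ref{cond} via the symplectic dictionary, identifying $\langle\mathcal{S}_{Q,I},\mathcal{S}_{C,I}\rangle$ with $\{N_{\mathbf{u}}\mid{\mathbf{u}}\in\text{Rowspace}(\hat{H}_I)\}$ and non-membership in $N(\mathcal{S}_Q)$ with anticommutation with some row of $H\hat{H}$. The consistency check on $\hat{H}_I$ that you flag is indeed glossed over in the paper's own proof, but your argument is otherwise the same.
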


\begin{proof}
Since
\begin{equation*}
\langle\mathcal{S}_{Q,I},\mathcal{S}_{C,I}\rangle=\{N_{\mathbf{u}}|{\mathbf{u%
}}\in\text{Rowspace}(\hat{H}_I)\},
\end{equation*}
condition 1 corresponds to
\begin{equation*}
N_{{\mathbf{e}}-{\mathbf{e}}^{\prime }}=N_{{\mathbf{e}}^{\prime
}}^{\dagger}N_{{\mathbf{e}}}\in \langle\mathcal{S}_{Q,I},\mathcal{S}%
_{C,I}\rangle.
\end{equation*}

Let ${\mathbf{v}}_{j}$ denote the $j$-th row of $H\hat{H}$; condition 2 is
equivalent to the statement that for ${\mathbf{e}}$ and ${\mathbf{e}}%
^{\prime }$ there exists a ${\mathbf{v}}_{j}$ such that
\begin{equation*}
\left[ N_{{\mathbf{e}}^{\prime }}^{\dagger }N_{{\mathbf{e}}},N_{{\mathbf{v}}%
_{j}}\right] \neq 0
\end{equation*}%
Therefore, conditions 1 and 2 together are equivalent to $N_{{\mathbf{e}}%
^{\prime }}^{\dagger }N_{{\mathbf{e}}}\in \langle \mathcal{S}_{Q,I},\mathcal{%
S}_{C,I}\rangle \cup (\mathcal{G}^{n}-N(\mathcal{S}_{Q}))$, which are the
error correcting conditions of Theorem \ref{cond}.
\end{proof}

\subsection{Properties of EACQs}

\begin{theorem}
\label{trans1} We can transform any $[[n,q+c,d_1;e]]$ EAQECC code $\mathcal{C%
}_1$ into an $[[n,q:c,d_2;e]]$ EACQ code $\mathcal{C}_2$, and transform any $%
[[n,q:c,d_2;e]]$ EACQ code $\mathcal{C}_2$ into an $[[n,q,d_3;e]]$ EAQECC
code $\mathcal{C}_3$, where $d_1\leq d_2 \leq d_3$.
\end{theorem}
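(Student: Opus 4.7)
The plan is to handle the two transformations by direct constructions and then derive each distance inequality from the error-correcting condition of Theorem~\ref{cond} (which also encompasses the EAQECC criterion recalled in Section~\ref{sec2}).

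For the forward direction $\mathcal{C}_1\to\mathcal{C}_2$, let $\mathcal{S}_{Q_1}$ be the stabilizer of $\mathcal{C}_1$ and let $L_1,\ldots,L_{q+c}\in\mathcal{G}^n$ be a pairwise-commuting set of logical $\bar{Z}$ operators for its $q+c$ encoded qubits (each sitting in the normalizer of $\mathcal{S}_{Q_1}$). I would pick any $c$ of them, say $L_1,\ldots,L_c$, and define $\mathcal{C}_2$ by $\mathcal{S}_{Q_2}:=\mathcal{S}_{Q_1}$ and $\mathcal{S}_C:=\mathcal{S}_{C,I}:=\langle L_1,\ldots,L_c\rangle$, corresponding to $c_1=c,\,c_2=0$ in Theorem~\ref{thm1}. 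The commutation relations (\ref{comm}) hold automatically. To obtain $d_2\geq d_1$, I would note that $\mathcal{S}_{Q_2}=\mathcal{S}_{Q_1}$ forces $\mathcal{S}_{Q_2,I}=\mathcal{S}_{Q_1,I}$ (both are the center of the same group) and $N(\mathcal{S}_{Q_2})=N(\mathcal{S}_{Q_1})$; hence the EACQ acceptable set $\langle\mathcal{S}_{Q_2,I},\mathcal{S}_{C,I}\rangle\cup(\mathcal{G}^n-N(\mathcal{S}_{Q_2}))$ from Theorem~\ref{cond} contains the EAQECC acceptable set $\mathcal{S}_{Q_1,I}\cup(\mathcal{G}^n-N(\mathcal{S}_{Q_1}))$, so every error set corrected by $\mathcal{C}_1$ is still corrected by $\mathcal{C}_2$.

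For the backward direction $\mathcal{C}_2\to\mathcal{C}_3$, I would simply take $\mathcal{S}_{Q_3}:=\langle\mathcal{S}_Q,\mathcal{S}_C\rangle$; the remark right after Theorem~\ref{thm1} already identifies this as the stabilizer of an $[[n,q;e]]$ EAQECC. Reading off the canonical generators in (\ref{S}) and (\ref{Sc}), the generators of $\mathcal{S}_{Q_3}$ reorganize into $\bar{Z}_1,\ldots,\bar{Z}_s$ together with the $e$ symplectic pairs $\bar{Z}_{s+j},\bar{X}_{s+j}$ ($j=1,\ldots,e$), so its isotropic subgroup is $\mathcal{S}_{Q_3,I}=\langle\mathcal{S}_{Q,I},\mathcal{S}_{C,I}\rangle$. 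Because $\mathcal{S}_Q\subseteq\mathcal{S}_{Q_3}$, its normalizer can only shrink: $N(\mathcal{S}_{Q_3})\subseteq N(\mathcal{S}_Q)$, whence $\mathcal{G}^n-N(\mathcal{S}_Q)\subseteq\mathcal{G}^n-N(\mathcal{S}_{Q_3})$. Theorem~\ref{cond} then shows the EACQ condition for $\mathcal{C}_2$ implies the EAQECC condition for $\mathcal{C}_3$ on the same error set, giving $d_3\geq d_2$.

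The step I expect to require the most care is the identification $\mathcal{S}_{Q_3,I}=\langle\mathcal{S}_{Q,I},\mathcal{S}_{C,I}\rangle$: one has to check that combining the symplectic classical and symplectic quantum generators does not enlarge the center of the combined group beyond these two isotropic pieces. The canonical commutation pattern (\ref{comm_sp}) makes this transparent, because the symplectic classical generators of Theorem~\ref{thm1} live on the ebit positions $s+1,\ldots,s+c_2$ that are disjoint from the quantum symplectic positions $s+c_2+1,\ldots,s+e$, so the two symplectic blocks decouple. Beyond this bookkeeping, both distance comparisons reduce to the set-containment observations above.
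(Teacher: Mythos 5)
Your proposal is correct and follows essentially the same route as the paper: for $\mathcal{C}_1\to\mathcal{C}_2$ you adjoin a commuting (purely isotropic) classical stabilizer $\mathcal{S}_{C}$ built from normalizer elements and compare the acceptable sets $\mathcal{S}_{Q,I}\cup(\mathcal{G}^n-N(\mathcal{S}_Q))\subseteq\langle\mathcal{S}_{Q,I},\mathcal{S}_{C,I}\rangle\cup(\mathcal{G}^n-N(\mathcal{S}_Q))$, and for $\mathcal{C}_2\to\mathcal{C}_3$ you absorb $\mathcal{S}_C$ into $\mathcal{S}_Q$ and use $N(\mathcal{S}_Q')\subseteq N(\mathcal{S}_Q)$, exactly as in the paper's proof. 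Your explicit choice of logical $\overline{Z}$ operators and the check that the merged isotropic subgroup is precisely $\langle\mathcal{S}_{Q,I},\mathcal{S}_{C,I}\rangle$ only make more concrete what the paper leaves implicit.
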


\begin{proof}
The stabilizer group $\mathcal{S}_Q$ of $\mathcal{C}_1$ is of size $2^{s+2e}$%
, where $s+q+c+e=n$. The isotropic subgroup $\mathcal{S}_{Q,I}$ and the
symplectic subgroup $\mathcal{S}_{Q,S}$ of $\mathcal{S}_Q$ is of size $2^s$
and $2^{2e}$, respectively. If we simply add an Abelian group $\mathcal{S}_C$
of size $2^{c}$ such that $\mathcal{S}_C\cap\mathcal{S}_Q=\emptyset$, then $(%
\mathcal{S}_Q,\mathcal{S}_C)$ defines an $[[n,q:c,d_2;e]]$ EACQ code $%
\mathcal{C}_2$ for some $d_2$, which follows from Theorem \ref{thm1}. Let ${%
\mathbf{E}}_1$ be the error set that can be corrected by $\mathcal{C}_1$,
and ${\mathbf{E}}_2$ be the error set that can be corrected by $\mathcal{C}%
_2 $. Clearly, ${\mathbf{E}}_1\subset {\mathbf{E}}_2$ (see table \ref{comp}%
), so $\mathcal{C}_2$ can correct more errors than $\mathcal{C}_1$.
Therefore, $d_2\geq d_1$.

In general, an $[[n,q:c,d_2;e]]$ EACQ code $\mathcal{C}_2$ is defined by $%
\mathcal{S}_Q=\langle\mathcal{S}_{Q,I},\mathcal{S}_{Q,S}\rangle$ and $%
\mathcal{S}_C=\langle\mathcal{S}_{C,I},\mathcal{S}_{C,S}\rangle$, where the
isotropic subgroup $\mathcal{S}_{Q,I}$ and the symplectic subgroup $\mathcal{%
S}_{Q,S}$ of $\mathcal{S}_Q$ is of size $2^{s-c_1}$ and $2^{2(e-c_2)}$,
respectively, and the isotropic subgroup $\mathcal{S}_{C,I}$ and the
symplectic subgroup $\mathcal{S}_{C,S}$ of $\mathcal{S}_C$ is of size $%
2^{c_1}$ and $2^{2c_2}$, respectively. Here the parameters satisfy $s+q+e=n$
and $c_1+2c_2=c$. Now let $\mathcal{S}_{Q,I}^{\prime }=\langle\mathcal{S}%
_{Q,I},\mathcal{S}_{C,I}\rangle$ and $\mathcal{S}_{Q,S}^{\prime }=\langle%
\mathcal{S}_{Q,S},\mathcal{S}_{C,S}\rangle$. Then $\mathcal{S}_Q^{\prime
}=\langle\mathcal{S}_{Q,I}^{\prime },\mathcal{S}_{Q,S}^{\prime }\rangle$
defines an $[[n,q,d_3;e]]$ EAQECC code $\mathcal{C}_3$. Let ${\mathbf{E}}_3$
be the error set that can be corrected by $\mathcal{C}_3$. Let $E\in{\mathbf{%
E}}_2$, then either $E\in\langle \mathcal{S}_{Q,I},\mathcal{S}_{C,I}\rangle$
or $E\not\in N(\mathcal{S}_{Q})$.

\begin{itemize}
\item If $E\in\langle \mathcal{S}_{Q,I},\mathcal{S}_{C,I}\rangle$, then $%
E\in \mathcal{S}_{Q,I}^{\prime }$. Thus, $E\in{\mathbf{E}}_3$.

\item Since $\mathcal{S}_{Q}\subset \mathcal{S}_{Q}^{\prime }$, we have $N(
\mathcal{S}_{Q}^{\prime })\subset N(\mathcal{S}_{Q})$. If $E\not\in N(%
\mathcal{S}_Q)$, then $E\not\in N(\mathcal{S}_Q^{\prime })$. Thus, $E\in{%
\mathbf{E}}_3$.
\end{itemize}

Putting these together we get ${\mathbf{E}}_2\subset {\mathbf{E}}_3$.
Therefore $d_3\geq d_2$.
\end{proof}

It is worth pointing out that the theory of EACQ codes naturally includes
the set of classically enhanced quantum codes that do not require
entanglement as a subclass. These would be codes for which there is no
nontrivial symplectic subgroup for either $\mathcal{S}_{Q}$ or $\mathcal{S}%
_{C}$, so that both of these groups are purely isotropic. In terms of the
parameters describing the code, this is the special case where $e=0$. Our
first example in the next section is exactly such a code.

To conclude this section, we list the different error-correcting criteria of
an EAQECC and an EACQ:
\begin{table}[tph]
\begin{center}
\begin{tabular}{|c|c|}
\hline
EAQECC & EACQ \\ \hline
$E_{m}^{\dagger }E_{p}\not\in N(\langle \mathcal{S}_{Q,I},\mathcal{S}%
_{Q,S}\rangle )$ & $E_{m}^{\dagger }E_{p}\not\in N(\langle \mathcal{S}_{Q,I},%
\mathcal{S}_{Q,S}\rangle )$ \\
$E_{m}^{\dagger }E_{p}\in \mathcal{S}_{Q,I}$ & $E_{m}^{\dagger }E_{p}\in
\langle \mathcal{S}_{Q,I},\mathcal{S}_{C,I}\rangle $ \\ \hline
\end{tabular}
\end{center}
\caption{The error-correcting conditions of EAQECCs and EACQs.}
\label{comp}
\end{table}

\section{Examples}

\label{sec4}

\subsection{$[[9,1:3,3;0]]$ EACQ}

We first give an example of a code that starts from an overly redundant
quantum code, and exploits that redundancy by additionally encoding
classical information. Starting from the 9-qubit Shor code, we modify it to
encode three additional classical bits into the quantum code. The modified
Shor code presented here encodes one qubit and three classical bits into
nine physical qubits, and it is still able to correct an arbitrary error on
a single qubit.

The code is a straightforward combination of the original 9 qubit Shor code,
with parity-check matrix
\begin{equation*}
\hat{H}=\left(
\begin{array}{ccccccccc}
1 & 1 & 0 & 0 & 0 & 0 & 0 & 0 & 0 \\
0 & 1 & 1 & 0 & 0 & 0 & 0 & 0 & 0 \\
0 & 0 & 0 & 1 & 1 & 0 & 0 & 0 & 0 \\
0 & 0 & 0 & 0 & 1 & 1 & 0 & 0 & 0 \\
0 & 0 & 0 & 0 & 0 & 0 & 1 & 1 & 0 \\
0 & 0 & 0 & 0 & 0 & 0 & 0 & 1 & 1 \\
0 & 0 & 0 & 0 & 0 & 0 & 0 & 0 & 0 \\
0 & 0 & 0 & 0 & 0 & 0 & 0 & 0 & 0%
\end{array}
\right\vert \left.
\begin{array}{ccccccccc}
0 & 0 & 0 & 0 & 0 & 0 & 0 & 0 & 0 \\
0 & 0 & 0 & 0 & 0 & 0 & 0 & 0 & 0 \\
0 & 0 & 0 & 0 & 0 & 0 & 0 & 0 & 0 \\
0 & 0 & 0 & 0 & 0 & 0 & 0 & 0 & 0 \\
0 & 0 & 0 & 0 & 0 & 0 & 0 & 0 & 0 \\
0 & 0 & 0 & 0 & 0 & 0 & 0 & 0 & 0 \\
1 & 1 & 1 & 1 & 1 & 1 & 0 & 0 & 0 \\
0 & 0 & 0 & 1 & 1 & 1 & 1 & 1 & 1%
\end{array}
\right) ,
\end{equation*}
and the $[8,3]$ classical code, with parity check matrix
\begin{equation*}
H=%
\begin{pmatrix}
1 & 0 & 1 & 0 & 1 & 0 & 0 & 0 \\
0 & 0 & 0 & 1 & 0 & 1 & 0 & 0 \\
1 & 1 & 1 & 0 & 0 & 0 & 0 & 0 \\
1 & 0 & 0 & 1 & 0 & 0 & 1 & 0 \\
1 & 1 & 1 & 1 & 1 & 1 & 0 & 1%
\end{pmatrix}
.
\end{equation*}
Table \ref{EACQ1} gives the generators of $\mathcal{S}_Q$ and $\mathcal{S}_C$
as in (\ref{eq_SQ}) and (\ref{eq_SC}) for the code.

\begin{table}[htp]
\begin{center}
\begin{tabular}{|c|c|ccccccccc|}
\hline\hline
\multirow{5}{*}{$\cS_Q$} & $g_1$ & Z & Z & I & Z & Z & I & Z & Z & I \\
& $g_2$ & I & I & I & I & Z & Z & I & Z & Z \\
& $g_3$ & Z & I & Z & Z & Z & I & I & I & I \\
& $g_4$ & Y & Y & X & X & Y & Y & I & I & I \\
& $g_5$ & Z & I & Z & Y & X & Y & Y & X & Y \\ \hline
\multirow{3}{*}{$\cS_C$} & $g_1^{\prime }$ & Z & Z & I & I & I & I & I & I &
I \\
& $g_2^{\prime }$ & I & Z & Z & I & I & I & I & I & I \\
& $g_3^{\prime }$ & I & I & I & I & Z & Z & I & I & I \\ \hline\hline
\end{tabular}%
\end{center}
\caption{The resulting $[[9,1:3,3;0]]$ EACQ encodes one qubit and three
classical bits into nine physical qubits.}
\label{EACQ1}
\end{table}

\begin{proposition}
The modified Shor code presented above can correct an arbitrary error on a
single qubit.
\end{proposition}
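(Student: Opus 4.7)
The plan is to verify the two error-correcting conditions of the Corollary for the error set $\mathbf{E}=\{I,X_i,Y_i,Z_i:1\le i\le 9\}$. For each pair of distinct single-qubit Paulis, the product $E_m^{\dagger}E_p$ has weight at most $2$, so the task reduces to checking that every such weight-$\le 2$ operator either lies in $\langle\mathcal{S}_{Q,I},\mathcal{S}_{C,I}\rangle$ or anticommutes with at least one of the generators $g_1,\ldots,g_5$ of $\mathcal{S}_Q$.

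The key structural observation to record first is that the eight operators listed in Table~\ref{EACQ1} collectively generate the stabilizer $\mathcal{S}_{\text{Shor}}$ of the ordinary $[[9,1,3]]$ Shor code specified by $\hat{H}$: the three $Z$-type classical generators $g_1'=Z_1Z_2,\,g_2'=Z_2Z_3,\,g_3'=Z_5Z_6$ together with the three $Z$-type $\mathcal{S}_Q$ generators $g_1,g_2,g_3$ span the six $Z$-type Shor stabilizers, and modulo those $Z$-type stabilizers $g_4$ and $g_5$ reduce to the two $X$-type Shor generators $X_1X_2X_3X_4X_5X_6$ and $X_4X_5X_6X_7X_8X_9$. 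Since $e=0$, $\mathcal{S}_{\text{Shor}}$ is abelian and purely isotropic, so $\langle\mathcal{S}_{Q,I},\mathcal{S}_{C,I}\rangle=\mathcal{S}_{\text{Shor}}$. Because the Shor code is well known to correct every single-qubit Pauli error, each weight-$\le 2$ product $E_m^{\dagger}E_p$ either already lies in $\mathcal{S}_{\text{Shor}}$---yielding condition (i) of the Corollary---or fails to commute with some element of $\mathcal{S}_{\text{Shor}}$.

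The main obstacle is that the EACQ condition (ii) demands anticommutation with $\mathcal{S}_Q$ specifically, not with the full $\mathcal{S}_{\text{Shor}}$, so I must rule out a weight-$\le 2$ product $P$ that commutes with every $g_k\in\mathcal{S}_Q$ yet anticommutes with some $g_k'\in\mathcal{S}_C$. Because each $g_k'$ is pure-$Z$, such a $P$ must carry an $X$ or $Y$ on some qubit in $\{1,2,3,5,6\}$; a short case analysis---using that every qubit $i\in\{1,\ldots,9\}$ carries a $Z$ in at least one of $g_1,g_2,g_3$, and that $g_4$ and $g_5$ together act nontrivially on every qubit---shows that any such $P$ either already lies in $\mathcal{S}_{\text{Shor}}$ or anticommutes with $g_4$ or $g_5$. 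The cleanest way to package the entire verification is to tabulate the 5-bit $\mathcal{S}_Q$-syndrome of each of the $27$ nontrivial single-qubit Paulis: pairs with distinct syndromes automatically satisfy (ii), while the only syndrome coincidences are the usual Shor degeneracies $Z_i\sim Z_j$ within a single three-qubit block, whose difference $Z_iZ_j$ manifestly belongs to $\mathcal{S}_{\text{Shor}}=\langle\mathcal{S}_{Q,I},\mathcal{S}_{C,I}\rangle$, giving (i) and completing the proof.
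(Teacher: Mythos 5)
Your overall strategy---reduce to the error-correcting condition of Theorem~\ref{cond}, observe that since $e=0$ the group $\langle\mathcal{S}_{Q,I},\mathcal{S}_{C,I}\rangle$ is the full stabilizer of the ordinary $[[9,1,3]]$ Shor code, and then tabulate the five-bit $\mathcal{S}_Q$-syndromes of all $27$ nontrivial single-qubit Paulis---is the right one, and your structural observation that $g_1,\dots,g_5,g'_1,g'_2,g'_3$ regenerate the Shor stabilizer is correct. The gap is that the two claims you assert without actually carrying out the computation are false. From Table~\ref{EACQ1}, the generators $g_1,\dots,g_5$ act on qubit $1$ as $Z,I,Z,Y,Z$ and on qubit $4$ as $Z,I,Z,X,Y$; hence $Y_1$ anticommutes with exactly $g_1,g_3,g_5$, and so does $X_4$, giving the common syndrome $(1,0,1,0,1)$. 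The product $Y_1X_4$ is a weight-two operator with nontrivial $X$-part and therefore is not in $\langle\mathcal{S}_{Q,I},\mathcal{S}_{C,I}\rangle$ (every Shor-stabilizer element with nontrivial $X$-part has weight at least six); worse, it anticommutes with $g'_1=Z_1Z_2$, so confusing $Y_1$ with $X_4$ flips the first classical bit. The same failure occurs for the pairs $(Y_2,Y_7)$ (common syndrome $(1,0,0,0,0)$) and $(Y_6,Y_9)$ (common syndrome $(0,1,0,0,0)$). Thus your claim that ``the only syndrome coincidences are the Shor degeneracies $Z_i\sim Z_j$ within a block'' is wrong, and the case analysis you sketch has explicit counterexamples: $Y_1X_4$ commutes with $g_4$ (its letters match those of $g_4$ on qubits $1$ and $4$) and with $g_5$ (it anticommutes with $g_5$ on both qubits $1$ and $4$, so the signs cancel), yet it anticommutes with $g'_1$ and lies outside $\mathcal{S}_{\rm Shor}$.

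To be fair, the paper's own proof has the same omission: it checks only that single-qubit $Z$ errors collide harmlessly within triplets and that the nine single-qubit $X$ errors have distinct syndromes, and it never addresses $Y$ errors or $X$-versus-$Y$ collisions. Your tabulation, if actually executed, is precisely what exposes the problem: for the code as printed, the pair $\{Y_1,X_4\}$ violates both conditions of the Corollary, and since $Y_1X_4\in N(\mathcal{S}_Q)\setminus N(\langle\mathcal{S}_Q,\mathcal{S}_C\rangle)$ the two errors are genuinely indistinguishable by the $\mathcal{S}_Q$-syndrome while acting differently on the classical register, so no recovery map in the sense of Theorem~\ref{recovery} can handle both. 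A complete argument must include the $Y$-error rows of the syndrome table, and as it stands either the proposition must be weakened (e.g., to $X$ and $Z$ errors only) or the parity-check matrices adjusted; in any case your proof cannot be accepted as written.
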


\begin{proof}
This modified Shor code is degenerate. A single-qubit $Z$ error on any of
the qubits in the same triplet (that is, on any of qubits $1,2,3$, or any of
qubits $4,5,6$, or any of qubits $7,8,9$) result in the same error syndrome,
and can be corrected using the same recovery operation. However, each of the
single-qubit $X$ errors gives a distinct error syndrome, and can therefore
be corrected. The syndromes are obtained by measuring $\{g_{1},\cdots
,g_{5}\}$.
\end{proof}

\subsection{$[[8,1:3,3;1]]$ EACQ code}

The following example comes from modifying the $[[8,1,3;1]]$ EAQECC code
given in \cite{HBD07}. The $[[8,1:3,3;1]]$ EACQ code comes from a
combination of the $[[8,1,3;1]]$ EAQECC with the quantum parity check matrix
\begin{equation*}
\hat{H}=\left(
\begin{array}{cccccccc|cccccccc}
1 & 1 & 0 & 0 & 0 & 0 & 0 & 0 & 0 & 0 & 0 & 0 & 0 & 0 & 0 & 0 \\
1 & 0 & 1 & 0 & 0 & 0 & 0 & 0 & 0 & 0 & 0 & 0 & 0 & 0 & 0 & 0 \\
0 & 0 & 0 & 1 & 1 & 0 & 0 & 0 & 0 & 0 & 0 & 0 & 0 & 0 & 0 & 0 \\
0 & 0 & 0 & 1 & 0 & 1 & 0 & 0 & 0 & 0 & 0 & 0 & 0 & 0 & 0 & 0 \\
0 & 0 & 0 & 0 & 0 & 0 & 1 & 1 & 0 & 0 & 0 & 0 & 0 & 0 & 0 & 0 \\
0 & 0 & 0 & 0 & 0 & 0 & 0 & 0 & 1 & 1 & 1 & 1 & 1 & 1 & 0 & 0 \\
0 & 0 & 0 & 0 & 0 & 0 & 0 & 1 & 0 & 0 & 0 & 0 & 0 & 0 & 0 & 0 \\
0 & 0 & 0 & 0 & 0 & 0 & 0 & 0 & 1 & 1 & 1 & 0 & 0 & 0 & 1 & 1%
\end{array}%
\right) ,
\end{equation*}%
and the $[8,3]$ classical code, with parity check matrix
\begin{equation*}
H=%
\begin{pmatrix}
1 & 0 & 1 & 0 & 1 & 0 & 0 & 0 \\
0 & 1 & 1 & 0 & 0 & 0 & 0 & 0 \\
1 & 0 & 1 & 1 & 0 & 1 & 0 & 0 \\
0 & 0 & 1 & 1 & 0 & 0 & 1 & 0 \\
0 & 1 & 1 & 0 & 1 & 1 & 0 & 1%
\end{pmatrix}%
.
\end{equation*}%
$\hat{H}$ and $H$ together specify $(\mathcal{S}_{Q},\mathcal{S}_{C})$ for
the EACQ given in Table \ref{EACQ2}.
\begin{table}[tph]
\begin{center}
\begin{tabular}{|c|c|cccccccc|}
\hline\hline
\multirow{3}{*}{$\cS_{Q,I}$} & $g_1$ & Z & Z & I & Z & Z & I & Z & Z \\
& $g_2$ & Z & I & Z & Z & Z & I & I & I \\
& $g_3$ & Y & Y & X & X & Y & Y & I & I \\ \hline
\multirow{2}{*}{$\cS_{Q,S}$} & $g_4$ & I & I & I & I & Z & Z & I & Z \\
& $g_5$ & Z & I & Z & Y & Y & X & Y & Y \\ \hline
\multirow{3}{*}{$\cS_C$} & $g_1^{\prime }$ & Z & Z & I & I & I & I & I & I
\\
& $g_2^{\prime }$ & I & Z & Z & I & I & I & I & I \\
& $g_3^{\prime }$ & I & I & I & I & Z & Z & I & I \\ \hline\hline
\end{tabular}%
\end{center}
\caption{The resulting $[[8,1:3,3;1]]$ EACQ encodes one qubit and three
classical bits into eight physical qubits with the help of one ebit.}
\label{EACQ2}
\end{table}
The resulting EACQ encodes one qubit and three classical bits into eight
physical qubits with the help of one ebit. Since the $[[8,1,3;1]]$ code is
derived from the Shor code, this EACQ is clearly related to our first
example.

\subsection{EACQ codes from classical BCH codes}

Here, we will look at the $[[63,21,9;6]]$ EAQECC shown in \cite{HBD07},
which is constructed from a classical binary $[63,39,9]$ BCH code \cite%
{FJM77}. This EAQECC has the interesting property that removing the
symplectic pairs from the quantum parity check matrix will only decrease the
distance from $d=9$ to $d=7$ no matter how many pairs are removed.
Therefore, if we switch all the ebits from $\mathcal{S}_Q$ to $\mathcal{S}_C$%
, we will have a [[63,21:12,7;6]] EACQ. This example shows that it is
possible to encode extra classical information using ebits without degrading
the distance performance too much.

\section{Conclusions}

\label{sec5} In this paper, we have demonstrated yet another extension of
the standard quantum error correction scheme. The new formalism, EACQ, is a
quantum error-correcting code that can transmit both classical and quantum
information simultaneously. We consider this EACQ formalism as both a
generalization and a unification of EAQECCs and classical error correction,
in the following sense:

\begin{itemize}
\item For a purely quantum code ($c=0$), we have $\mathcal{S}_{C}=\emptyset$%
. Then this corresponds to the entanglement-assisted formalism. In this
case, the classical parity check matrix $H$ is chosen to be
\begin{equation*}
H=I_{(n-q)\times (n-q)}
\end{equation*}
such that the quantum parity-check matrix is $\hat{G}=H\hat{H}=\hat{H}$ for
the code.

\item For a purely classical code ($q=0$), we have $\mathcal{S}%
_{Q}=\emptyset $. In this case, the quantum parity check matrix $\hat{H}$ is
chosen to be
\begin{equation*}
\hat{H}=\left(I_{n\times n}|\mathbf{0}_{n\times n}\right)
\end{equation*}
such that the quantum parity-check matrix $\hat{G}=H\hat{H}=(H|\mathbf{0}%
_{n\times n})$ for the code. The classical code can be thought of as encoded
in the $Z$ basis.
\end{itemize}

On the other hand, the EACQ formalism provides further flexibility in the
use of quantum error correcting codes. As shown in the example section, the
EACQ can make use of extra redundancy in quantum codes by encoding
additional classical information. We also note that the passive error
correcting ability of an EACQ is increased at the cost of the quantum code
rate of an EAQECC.

We are currently investigating the relation between EACQs and other
extensions of standard quantum error correction, such as OQECC or ``operator
algebra quantum error correction" (OAQEC) \cite{BKK07}. Recently we are
aware of the work \cite{BKK07QEC}, which also allows correction of hybrid
classical-quantum information based on operator algebra. Given the wider
variety of resources in quantum information theory compared to classical
information theory, we can expect a correspondingly richer set of families
of quantum error-correcting codes.

\begin{acknowledgments}
We wish to acknowledge enlightening discussions with Igor Devetak, and
Cedric Beny. TAB received financial support from NSF Grant No.~CCF-0448658,
and TAB and MHH both received support from NSF Grant No.~ECS-0507270. IK and
MHH received financial support from NSF Grant No.~CCF-0524811 and NSF Grant
No.~CCF-0545845.
\end{acknowledgments}

\bibliographystyle{plain}
\bibliography{ref5}

\end{document}